\newcommand{\MM}{\mathcal{M}}
\newcommand{\DD}{\mathcal{D}}
\newtheorem{theorem}{Theorem}
\newtheorem{lem}{Lemma}
\newtheorem*{definition}{Definition}
\begin{document}
\begin{center}
	
	{\Large\textbf{Index Theorem for Domain Walls}}
	\vspace{0.5cm}
	
	{\large A.~V.~Ivanov$^\dag$}
	
	\vspace{0.5cm}
	
	$^\dag${\it St. Petersburg Department of Steklov Mathematical Institute of
		Russian Academy of Sciences,}\\{\it 27 Fontanka, St. Petersburg 191023, Russia}\\
	$^\dag${\it Leonhard Euler International Mathematical Institute, 10 Pesochnaya nab.,}\\{\it St. Petersburg 197022, Russia}\\
	{\it E-mail: regul1@mail.ru}
	
\end{center}

\begin{abstract}
The paper is devoted to the discussion of index theorem for domain walls condition. We give an extension of the theorem to the case, when not only Yang--Mills connection components have a jump on some surface of co-dimension 1, but also components of a Riemannian connection, while a metric remains continuous. 
\end{abstract}

\section{Introduction}
The Atiyah--Patodi--Singer index theorem (see \cite{Atiyah:1975jf}) plays a crucial role in the modern mathematical physics. It relates a spectral function of a Dirac operator and the integral of the corresponding Pontryagin density. The last one can contain singular terms that lead to consideration of surface terms on the manifold. Using the language of theoretical physics this can be expressed as follows: the index theorem gives relations between bulk and boundary anomalies (see \cite{Witten:2019bou}). 

One of the tasks of recent works is to generalize the theorem to the case of domain walls (see \cite{Fukaya:2019qlf,j+8}). For instance, it can be defined as the case when the gauge potential has a jump on some co-dimension 1 surface of the manifold. Actually, this problem is well-posed and you can get an explicit formula (see \cite{ivvas} and formula (\ref{APST}) below), which has a rather cumbersome form.

The first part of this work is devoted to the discussion of the second and third terms on the right hand side of formula (\ref{APST}). Indeed, they are integrals over the surface and at first glance have a different nature. The second one is the relative spectral asymmetry (RSA) for Dirac operators on both sides of the surface, while the third term does not make obvious sense and contains an extrinsic curvature.

In the paper we give a definition of the generalized RSA, which allows you to combine the surface terms into one in a quite natural way. We also prove the consistency of the new definition with the previous one and derive new relations (see Theorem \ref{th1}). 

Then we consider an extention of the Atiyah--Patodi--Singer index theorem to the case when both the Yang--Mills and the Riemannian connection components have discontinuities, while the metric tensor is continuous (see Theorem \ref{th2}). An example of using such type of condition can be found in \cite{vas}.

We are not going to repeat the long introduction of the work \cite{ivvas}, but only give a schematic statement of the problem, omitting the subtleties. Actually, information from the next section is enough to understand the theorems, but the reader can also refer to the section 2 from the paper \cite{ivvas}. Most of notations are borrowed from the monograph \cite{Nakahara:2003nw}.

\section{The main result}
Let us briefly recall the result of the paper \cite{ivvas}. First of all we introduce a compact Riemannian even-dimensional manifold $\MM$ without a boundary and its closed submanifold $\Sigma$, such that $\dim\MM=n$ and $\dim\Sigma=\mathrm{n}-1$. Let $g$, $\Gamma$, and $\mathcal{R}$ denote a smooth metric, the Riemannian connection 1-form, and the curvature 2-form. 

Then, after the introduction of an Hermitian vector bundle over $\MM$ and the choice of a special type of gauge fixing and representation for the Clifford structure, we can define a Yang--Mills connection 1-form $\mathcal{A}^+$, the corresponding curvature 2-form $\mathcal{F}^+$, and  a Dirac-type operator (see \cite{BV1}) on $\MM$, which near $\Sigma$ has the following structure
\begin{equation*}
\slashed{D}=\left( \begin{array}{cc} 0 & -\partial_{s} + \DD +\tfrac 12 K_{a}^{a} \\
\partial_{s} +\DD-\tfrac 12 K_{a}^{a} & 0 \end{array} \right), 
\label{DDgen}
\end{equation*} 
where $\DD$ is a Dirac operator on $\Sigma$, $K_{a}^{a}$ is the trace of an extrinsic curvature on $\Sigma$, and $s$ is the geodesic distance to $\Sigma$.

The domain walls condition means, that in some neighborhood of $\Sigma$ the form $\mathcal{A}^+$ is represented as a sum $\mathcal{A}^-+\theta(s)\mathcal{A}$, where $\theta(s)$ is the Heaviside step function, and $\mathcal{A}^-$ and $\mathcal{A}$ have smooth densities. Let $\mathcal{F}^-$ be the curvature 2-form for $\mathcal{A}^-$. Actually, the operator $\DD$ depends on $\mathcal{A}^+$, so if the form has a jump on $\Sigma$ the operator $\DD$ has different values from the both sides, $\DD^+$ and $\DD^-$.

The main result of the paper \cite{ivvas} is
\begin{equation}
\label{APST}
\mathrm{Index}(\slashed{D})=\int_{\MM\backslash\Sigma}P - \tfrac 12 \tilde\eta(\DD^+,\DD^-)-
\int_\Sigma T\hat{A}
(\Gamma^{\prime},\Gamma )\wedge [\mathrm{ch}(\mathcal{F}^+)-
\mathrm{ch}(\mathcal{F}^-)],
\end{equation}
where $P$ is the Pontryagin $n$-form, $\tilde\eta(\DD^+,\DD^-)$ is the RSA of the operators $\DD^+$ and $\DD^-$, $\mathrm{ch}$ is the Chern polynomial, and $T\hat{A}$ is the transgression of the $\hat{A}$-genus. Then, $\Gamma^{\prime}$ is a modification of the Riemannian connection 1-form $\Gamma$, that corresponds to a metric $g^{\prime}$, regularized version of $g$ in some neighbourhood of $\Sigma$, such that its  components have the property $g^{\prime}_{\mu\nu}=g^{\phantom{1}}_{\mu\nu}|_{s=0}$ near the surface $\Sigma$ (for details see section 4 of the paper \cite{ivvas}).

The standard definition of the RSA is formulated by using a Seeley--DeWitt coefficient for a family of Dirac operators $\DD(s)$, such that $\DD(0)=\DD^-$ and $\DD(1)=\DD^+$:
\begin{equation}
\tilde\eta(\DD^+,\DD^-)= -\frac 2{\sqrt{\pi}} \int_0^1 ds\, a_{n-2}(\partial_s\DD(s),\DD(s)^2).\label{tildeta}
\end{equation}

But this definition omits some information. Actually the surface $\Sigma$ is the submanifold of $\MM$ and has the extrinsic curvature. Roughly speaking, the definition (\ref{tildeta}) does not take into account the \textquotedblleft background\textquotedblright. 
To move on, let us expand the problem statement. We assume that on the surface $\Sigma$ not only the gauge connection $\mathcal{A}^+$ has a jump, but also the Riemannian one, while the metric tensor remains continuous. So locally near $\Sigma$ we have a decomposition $\Gamma^+=\Gamma^-+\theta(s)\Gamma_0$. By $\mathcal{R^+}$ and $\mathcal{R^-}$ we denote the corresponding Riemannian curvatures. Now we are ready to present our definition of a generalized relative spectral asymmetry
\begin{definition}
\label{def1}
Using the previous constructions and assumptions the generalized RSA equals to
\begin{equation}
\label{rsa1}
\tilde{\eta}(\mathcal{A}^+,\Gamma^+,\mathcal{A}^-,\Gamma^-)=
-2\int_{\Sigma}\bigg(\hat{A}(\mathcal{R}^+)\wedge T
\mathrm{ch}(\mathcal{A}^+,\mathcal{A}^-)+
T\hat{A}(\Gamma^+,\Gamma^-)\wedge\mathrm{ch}(\mathcal{F}^-)\bigg),
\end{equation}
where the restriction of the forms to the surface $\Sigma$ is meant.
\end{definition}
It is easy to obtain an equivalent representation for the generalized RSA
\begin{equation}
\label{rsa2}
\tilde{\eta}(\mathcal{A}^+,\Gamma^+,\mathcal{A}^-,\Gamma^-)
=-2\int_{\Sigma}\bigg(\hat{A}(\mathcal{R}^-)\wedge T
\mathrm{ch}(\mathcal{A}^+,\mathcal{A}^-)+
T\hat{A}(\Gamma^+,\Gamma^-)\wedge\mathrm{ch}(\mathcal{F}^+)\bigg),
\end{equation}
by using Stokes' theorem and the following equalities for the transgression:
\begin{equation*}
dT\mathrm{ch}(\mathcal{A}^+,\mathcal{A}^-)=
\mathrm{ch}(\mathcal{F}^{+})-\mathrm{ch}(\mathcal{F}^{-}),\,\,\,\,
dT\hat{A}(\Gamma^+,\Gamma^-)=\hat{A}(\mathcal{R}^{+})-\hat{A}(\mathcal{R}^{-}).
\end{equation*}

If the Riemannian connection 1-form does not have the jump ($\Gamma^-=\Gamma^+$), then we get a particular case
\begin{equation}
\label{rsa3}
\tilde{\eta}(\mathcal{A}^+,\Gamma^+,\mathcal{A}^-,\Gamma^+)=
-2\int_{\Sigma}\hat{A}(\mathcal{R}^+)\wedge T
\mathrm{ch}(\mathcal{A}^+,\mathcal{A}^-).
\end{equation}

To understand why the definition looks like this, let us try to study it in the context of the Atiyah--Patodi--Singer index theorem for domain walls. This is convenient to do in two stages. First, we consider the case studied earlier, when the density of the Riemannian connection 1-form is a smooth function. Hence, $\Gamma^+=\Gamma^-=\Gamma$. Also let $\Gamma^{\prime}$ be the 1-form, constructed for $\Gamma$, from formula (\ref{APST}).
\begin{theorem}
\label{th1}
Under the conditions described above we have the equalities:
\begin{equation}
\label{RSA}
\tilde\eta(\mathcal{A}^+,\Gamma^{\prime},\mathcal{A}^-,\Gamma^{\prime})=\tilde\eta(\DD^+,\DD^-);
\end{equation}
\begin{equation}
\label{for1}
\tilde\eta(\mathcal{A}^+,\Gamma,\mathcal{A}^-,\Gamma)=\tilde\eta(\DD^+,\DD^-)+2
\int_\Sigma T\hat{A}
 (\Gamma^{\prime},\Gamma )\wedge [\mathrm{ch}(\mathcal{F}^+)-
 \mathrm{ch}(\mathcal{F}^-)];
\end{equation}
\begin{equation}
\label{APST1}
\mathrm{Index}(\slashed{D})=\int_{\MM\backslash\Sigma}P - \tfrac 12 \tilde\eta(\mathcal{A}^+,\Gamma,\mathcal{A}^-,\Gamma).
\end{equation}
\end{theorem}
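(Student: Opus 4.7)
The plan is to establish (\ref{RSA}) by a direct local heat-kernel computation on $\Sigma$, and then to derive (\ref{for1}) and (\ref{APST1}) as essentially algebraic consequences of (\ref{RSA}) together with the main identity (\ref{APST}) of \cite{ivvas}.

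First I would prove (\ref{RSA}). Since $\Gamma^{\prime}=\Gamma^+=\Gamma^-$, the transgression $T\hat A(\Gamma^{\prime},\Gamma^{\prime})$ vanishes identically, so the second summand in Definition~\ref{def1} drops out and the left-hand side collapses to $-2\int_\Sigma\hat A(\mathcal R^{\prime})\wedge T\mathrm{ch}(\mathcal A^+,\mathcal A^-)$. For the right-hand side I would use the natural one-parameter family $\DD(s)$ built from the interpolating gauge connection $\mathcal A^-+s\mathcal A$ within the framework attached to the regularised metric $g^{\prime}$. The crucial feature of $g^{\prime}$ is that it was engineered to be $s$-independent in a neighbourhood of $\Sigma$, hence the extrinsic-curvature trace $K_a^a$ in the block expression for $\slashed D$ vanishes and the boundary geometry reduces to the intrinsic Riemannian structure of $(\Sigma,g^{\prime}|_\Sigma)$. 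A Getzler-type rescaling of the heat kernel of $\DD(s)^2$ then identifies the local density of $a_{n-2}(\partial_s\DD(s),\DD(s)^2)$ with $\sqrt\pi\,\hat A(\mathcal R^{\prime})\wedge\mathrm{tr}\bigl(c(\mathcal A)\,e^{-\mathcal F(s)}\bigr)$ in top degree, where $c(\mathcal A)$ denotes the Clifford action of $\mathcal A$; integrating in $s\in[0,1]$ converts the gauge factor into the transgression $T\mathrm{ch}(\mathcal A^+,\mathcal A^-)$, giving exactly (\ref{RSA}).

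Once (\ref{RSA}) is available, formulas (\ref{for1}) and (\ref{APST1}) are formal. Specialising Definition~\ref{def1} at $\Gamma^+=\Gamma^-=\Gamma$ and subtracting (\ref{RSA}) yields
$$\tilde\eta(\mathcal A^+,\Gamma,\mathcal A^-,\Gamma)-\tilde\eta(\DD^+,\DD^-)=-2\int_\Sigma\bigl(\hat A(\mathcal R)-\hat A(\mathcal R^{\prime})\bigr)\wedge T\mathrm{ch}(\mathcal A^+,\mathcal A^-).$$
Using $dT\hat A(\Gamma^{\prime},\Gamma)=\hat A(\mathcal R^{\prime})-\hat A(\mathcal R)$ and $dT\mathrm{ch}(\mathcal A^+,\mathcal A^-)=\mathrm{ch}(\mathcal F^+)-\mathrm{ch}(\mathcal F^-)$, and applying Stokes' theorem on the closed manifold $\Sigma$ to $d\bigl(T\hat A(\Gamma^{\prime},\Gamma)\wedge T\mathrm{ch}(\mathcal A^+,\mathcal A^-)\bigr)$ (the relative sign coming from the fact that both transgression forms are of odd degree), the right-hand side rewrites as $2\int_\Sigma T\hat A(\Gamma^{\prime},\Gamma)\wedge[\mathrm{ch}(\mathcal F^+)-\mathrm{ch}(\mathcal F^-)]$, which is (\ref{for1}). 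Substituting (\ref{for1}) into (\ref{APST}) then cancels the surface $T\hat A$-integral that is explicit in (\ref{APST}) and leaves (\ref{APST1}).

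The genuine obstacle lies in the first step: identifying the Seeley--DeWitt coefficient $a_{n-2}(\partial_s\DD(s),\DD(s)^2)$ with the local characteristic density $\hat A(\mathcal R^{\prime})\wedge\mathrm{tr}\bigl(c(\mathcal A)\,e^{-\mathcal F(s)}\bigr)$. This is a Bismut--Freed-type local calculation of the variation of the $\eta$-invariant on the odd-dimensional manifold $\Sigma$, and careful bookkeeping of sign conventions, form degrees, and the $\sqrt\pi$ normalization implicit in (\ref{tildeta}) is required to match the prefactor $-2$ of Definition~\ref{def1}. Everything downstream amounts to Stokes' theorem and arithmetic.
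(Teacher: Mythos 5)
Your treatment of (\ref{for1}) and (\ref{APST1}) coincides with the paper's: both are obtained from (\ref{RSA}), Definition~\ref{def1} specialised to $\Gamma^+=\Gamma^-=\Gamma$, Stokes' theorem applied to $d\bigl(T\hat A(\Gamma^{\prime},\Gamma)\wedge T\mathrm{ch}(\mathcal A^+,\mathcal A^-)\bigr)$ on the closed $\Sigma$, and substitution into (\ref{APST}); your sign bookkeeping there is right. Where you genuinely diverge is the key identity (\ref{RSA}). You propose a direct Bismut--Freed/Getzler-type local computation of the variation of the $\eta$-invariant on the odd-dimensional $\Sigma$, identifying $a_{n-2}(\partial_s\DD(s),\DD(s)^2)$ with the density $\hat A(\mathcal R^{\prime})\wedge\mathrm{tr}\bigl(c(\mathcal A)e^{-\mathcal F(s)}\bigr)$ and integrating in $s$ to produce $T\mathrm{ch}(\mathcal A^+,\mathcal A^-)$; you correctly observe that the regularised metric $g^{\prime}$ is what makes the intrinsic geometry of $\Sigma$ the relevant one. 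This is a viable route, but you leave its hardest step --- the local index-density identification with all normalisations --- as an acknowledged black box. The paper instead avoids any new heat-kernel analysis: it recalls from \cite{ivvas} that $\tilde\eta(\DD^+,\DD^-)=-2\lim_{\epsilon\to+0}\int_{\mathcal{C}_{\epsilon}}P_{\epsilon}$ for the pasted cylinder with the interpolating connection $\widetilde{\mathcal A}_{\epsilon}$, writes $P_{\epsilon}=\hat A(\widetilde{\mathcal R})\wedge\mathrm{ch}(\widetilde{\mathcal F}_{\epsilon})$, and evaluates the $\epsilon\to+0$ limit by the elementary asymptotic expansion of Lemma~\ref{lem1}, which directly yields the transgression form (\ref{rsa3}). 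So the paper's argument buys a short, self-contained derivation reducing to Chern--Weil algebra on the cylinder, whereas yours would require carrying out (or carefully citing) the Getzler-rescaling computation; if you take your route, that computation must actually be supplied, since it is precisely the content of the statement rather than a routine verification.
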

\begin{proof} 
It is easy to note, that formula (\ref{APST1}) follows from formula (\ref{for1}) and the Atiyah--Patodi--Singer index theorem for domain walls (\ref{APST}).

Then formula (\ref{for1}) can be obtained by using formula (\ref{RSA}), the definition of the generalized relative spectral function in the form (\ref{rsa3}), and Stokes' theorem.

Actually we should prove only the relation (\ref{RSA}). For this purpose we need remember how formula (\ref{tildeta}) was obtained in the work \cite{ivvas} for the product structure case, when the metric tensor, $\mathcal{A}^-$, and $\mathcal{A}$ do not depend on $s$ near $\Sigma$.  The way is this:
\begin{enumerate}
	\item cut the manifold $\MM$ along the surface $\Sigma$ and paste the cylinder $\mathcal{C}_{\epsilon}=\Sigma\times[0,\epsilon]$, $\epsilon>0$;
	\item define on $\mathcal{C}_{\epsilon}$ special types of connection 1-forms (Riemannian and Yang--Mills) in a smooth way
	\begin{equation*}
	\widetilde{\Gamma}=\Gamma_{\mu}\big|_{s=0}dx^{\mu},\,\,\,\,
	\widetilde{\mathcal{A}}_{\epsilon}=\mathcal{A}^-_{\mu}\big|_{s=0}dx^{\mu}+f(s/\epsilon)
	\mathcal{A}_{\mu}\big|_{s=0}dx^{\mu},
	\end{equation*}
	where
	\begin{equation*}
	f\in C^{\infty}([0,1]):\,f(0)=0,\,f(1)=1,\,
	f^{(m)}(0)=f^{(m)}(1)=0\,\,\,\mbox{for all}\,\,\,m>0;
	\end{equation*}
	\item calculate a Pontryagin $n$-form $P_{\epsilon}$ in the $\mathcal{C}_{\epsilon}$ by using Seeley--DeWitt coefficients;
	\item calculate the limit $\tilde\eta(\DD^+,\DD^-)=-2\lim_{\epsilon\to+0}\int_{\mathcal{C}_{\epsilon}}P_{\epsilon}$ by using the fact that the last integral does not depend on the parameter $\epsilon$.
\end{enumerate}

Hence we can repeat points 3 and 4, using an equivalent formula for the Pontryagin form $P_{\epsilon}$ through the $\hat{A}$-genus and Chern polynomial $\mathrm{ch}$ (see \cite{Nakahara:2003nw})
\begin{equation}
P_{\epsilon}=\hat{A}(\widetilde{\mathcal{R}})\wedge\mathrm{ch}(\widetilde{\mathcal{F}}_{\epsilon}),
\end{equation}
where $\widetilde{\mathcal{R}}$ and $\widetilde{\mathcal{F}}_{\epsilon}$ are the corresponding curvatures for $\widetilde{\Gamma}$ and $\widetilde{\mathcal{A}}_{\epsilon}$.

Then, applying an auxiliary Lemma \ref{lem1} after choosing the parameters in the form
\begin{equation*}
V=\mathrm{ch},\,\,\,\,
\omega_1=\hat{A}(\widetilde{\mathcal{R}}),\,\,\,\,
\omega_2=0,
\end{equation*}
\begin{equation*}
\mathcal{B}_1=\mathcal{A}^-_{\mu}\big|_{s=0}dx^{\mu},\,\,\,\,
\mathcal{B}_2=0,\,\,\,\,
\mathcal{B}_3=\mathcal{A}_{\mu}\big|_{s=0}dx^{\mu},
\end{equation*}
we get an equivalent representation for the RSA in the form (\ref{rsa3}), from which the last relation (\ref{RSA}) of the theorem follows.
\end{proof}

Now we are ready to proceed to the second part and formulate a more general case. Let the Riemannian connection has different values, $\Gamma^-$ and $\Gamma^+$, from both sides of $\Sigma$, while the metric tensor remains continuous, as it was noted earlier. Then the Atiyah--Patodi--Singer index theorem for domain walls has the following form

\begin{theorem}
\label{th2}
Under the conditions described above we have the equality
\begin{equation}
\label{APST2}
\mathrm{Index}(\slashed{D})=\int_{\MM\backslash\Sigma}P - \tfrac 12 \tilde\eta(\mathcal{A}^+,\Gamma^+,\mathcal{A}^-,\Gamma^-).
\end{equation}
\end{theorem}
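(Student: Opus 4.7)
The proof naturally follows the same four-step cylinder-pasting pattern used for Theorem \ref{th1}: cut $\MM$ along $\Sigma$, insert a cylinder $\mathcal{C}_\epsilon=\Sigma\times[0,\epsilon]$, interpolate the discontinuous data smoothly across the cylinder, compute the Pontryagin $n$-form there, and take $\epsilon\to+0$. The only new ingredient is that now the Riemannian 1-form must be interpolated in parallel with the Yang--Mills one. Concretely, on $\mathcal{C}_\epsilon$ I would set
\begin{equation*}
\widetilde{\Gamma}_\epsilon=\Gamma^-_\mu\big|_{s=0}dx^\mu+f(s/\epsilon)\Gamma_{0,\mu}\big|_{s=0}dx^\mu,\qquad
\widetilde{\mathcal{A}}_\epsilon=\mathcal{A}^-_\mu\big|_{s=0}dx^\mu+f(s/\epsilon)\mathcal{A}_\mu\big|_{s=0}dx^\mu,
\end{equation*}
with the same cut-off $f$ as in the proof of Theorem \ref{th1}. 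Because the metric is continuous across $\Sigma$, the metric on the glued manifold can be taken as the constant extension of $g|_{s=0}$ on $\mathcal{C}_\epsilon$, so that a globally smooth Dirac-type operator $\slashed D_\epsilon$ exists. Its index is independent of $\epsilon$ and, by a homotopy/excision argument identical to the one underlying (\ref{APST}), it coincides with $\mathrm{Index}(\slashed D)$. The Atiyah--Singer theorem on the smoothed closed manifold then gives
\begin{equation*}
\mathrm{Index}(\slashed D)=\int_{\MM\setminus\Sigma}P+\int_{\mathcal{C}_\epsilon}\hat A(\widetilde{\mathcal R}_\epsilon)\wedge\mathrm{ch}(\widetilde{\mathcal F}_\epsilon),
\end{equation*}
so the work reduces to showing that the cylinder integral tends to $-\tfrac 12\tilde\eta(\mathcal A^+,\Gamma^+,\mathcal A^-,\Gamma^-)$.

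To evaluate that limit I would apply the auxiliary Lemma \ref{lem1} in two stages. First, holding $\widetilde\Gamma_\epsilon$ fixed and varying $f(s/\epsilon)$ inside $\widetilde{\mathcal A}_\epsilon$ exactly as in the proof of Theorem \ref{th1} (choosing $V=\mathrm{ch}$, $\omega_1=\hat A(\widetilde{\mathcal R}_\epsilon)$, $\omega_2=0$, $\mathcal B_1=\mathcal A^-|_{s=0}$, $\mathcal B_2=0$, $\mathcal B_3=\mathcal A|_{s=0}$), the Chern-character factor produces the transgression $T\mathrm{ch}(\mathcal A^+,\mathcal A^-)$ and the cylinder integral collapses, in the limit $\epsilon\to+0$, onto the surface integral of $\hat A(\mathcal R^-)\wedge T\mathrm{ch}(\mathcal A^+,\mathcal A^-)$ over $\Sigma$. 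A second application of Lemma \ref{lem1}, now with $V=\hat A$ applied to the interpolation $\widetilde\Gamma_\epsilon$ against the background $\mathrm{ch}(\widetilde{\mathcal F}_\epsilon)$, turns the remaining Riemannian variation into $T\hat A(\Gamma^+,\Gamma^-)$ wedged with $\mathrm{ch}(\mathcal F^+)$. Summing the two contributions reproduces exactly the right-hand side of the equivalent representation (\ref{rsa2}) of Definition \ref{def1}, hence $-\tfrac 12\tilde\eta(\mathcal A^+,\Gamma^+,\mathcal A^-,\Gamma^-)$, and (\ref{APST2}) follows.

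The main obstacle is carrying out this two-stage reduction coherently, since the two interpolations are coupled inside the single integrand $\hat A(\widetilde{\mathcal R}_\epsilon)\wedge\mathrm{ch}(\widetilde{\mathcal F}_\epsilon)$ and Lemma \ref{lem1} is adapted to a single invariant polynomial. I would address this by performing the interpolation in two separate cylinders glued successively to $\MM\setminus\Sigma$: on the first cylinder only $\widetilde{\mathcal A}_\epsilon$ varies with $\widetilde\Gamma_\epsilon\equiv\Gamma^-|_{s=0}$, while on the second only $\widetilde\Gamma_\epsilon$ varies with $\widetilde{\mathcal A}_\epsilon\equiv\mathcal A^+|_{s=0}$. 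The index is unchanged because the insertion of each smooth cylinder is a homotopy, and Lemma \ref{lem1} then applies to each cylinder separately. The identity between formulas (\ref{rsa1}) and (\ref{rsa2}), which follows from Stokes' theorem and the transgression relations $dT\mathrm{ch}(\mathcal A^+,\mathcal A^-)=\mathrm{ch}(\mathcal F^+)-\mathrm{ch}(\mathcal F^-)$ and $dT\hat A(\Gamma^+,\Gamma^-)=\hat A(\mathcal R^+)-\hat A(\mathcal R^-)$, guarantees that the opposite order of interpolation produces the same limit and thus furnishes a consistency check for the construction.
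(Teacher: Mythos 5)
Your proposal is correct and follows essentially the same route as the paper: cut along $\Sigma$, paste two cylinders in a row, interpolate one connection on each cylinder while freezing the other, and apply Lemma \ref{lem1} twice. The only cosmetic difference is the order of the two interpolations (you vary $\mathcal{A}$ first with $\Gamma^-$ frozen and then $\Gamma$ with $\mathcal{A}^+$ frozen, landing on the representation (\ref{rsa2}) rather than (\ref{rsa1})), which is immaterial by the Stokes'-theorem equivalence you yourself invoke.
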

\begin{proof}
It can be achieved by using two cylinders and steps from the previous theorem. Indeed, let us cut the manifold $\MM$ along the surface $\Sigma$ and then paste two cylinders, $\mathcal{C}_{\epsilon_1}$ and $\mathcal{C}_{\epsilon_2}$, in a row. Then we define connection 1-forms on the cylinders in the following way:
\begin{enumerate}
	\item on $\mathcal{C}_{\epsilon_1}$ a Riemannian 1-form is defined in a smooth way between $\Gamma^-|_{\Sigma}$ and $\Gamma^+|_{\Sigma}$, while a Yang--Mills 1-form between $\mathcal{A}^-|_{\Sigma}$ and $\mathcal{A}^-|_{\Sigma}$;
	\item on $\mathcal{C}_{\epsilon_2}$ a Yang--Mills 1-form is defined in a smooth way between $\mathcal{A}^-|_{\Sigma}$ and $\mathcal{A}^+|_{\Sigma}$, while a Riemannian 1-form between $\Gamma^+|_{\Sigma}$ and $\Gamma^+|_{\Sigma}$.
\end{enumerate}

Then, using the considerations from the proof of the previous theorem, we need to find the limit
\begin{equation*}
-2\lim_{\epsilon_1\to+0}\int_{\mathcal{C}_{\epsilon_1}}P_{\epsilon_1}^1
-2\lim_{\epsilon_2\to+0}\int_{\mathcal{C}_{\epsilon_2}}P_{\epsilon_2}^2,
\end{equation*}
where $P_{\epsilon_1}^1$ and $P_{\epsilon_2}^2$ are the corresponding Pontryagin $n$-forms on $\mathcal{C}_{\epsilon_1}$ and $\mathcal{C}_{\epsilon_2}$. Let us also note that it is possible to construct the homotopy, under which all curvatures and their derivatives in the bulk, as well as the matching conditions on $\Sigma$ together with relevant geometric invariants on $\Sigma$ are smooth functions of homotopy parameter. Thus, the last integrals are also a smooth function of the parameter. Moreover they are constants, because the index is an integer.

Applying Lemma \ref{lem1} twice we obtain the function $\tilde\eta(\mathcal{A}^+,\Gamma^+,\mathcal{A}^-,\Gamma^-)$ in the form (\ref{rsa1}).
\end{proof}

\section{Auxiliary lemma}
\begin{lem}
\label{lem1}
Let $\epsilon_0>\epsilon>0$, and $\mathcal{A}_{\epsilon}=\mathcal{B}_1+s\mathcal{B}_2+f(s/\epsilon)\mathcal{B}_3$, where $f\in C^{\infty}([0,1])$, such that $f(0)=0$ and $f(1)=1$, is a family of connection 1-forms on a cylinder $\mathcal{C}_{\epsilon}=\Sigma\times[0,\epsilon]$. $\mathcal{F}_{\epsilon}$ is the corresponding field strength. Then let $V$ be an invariant polynomial, and $\omega=\omega_1+s\omega_2$ is a form of even degree, the density of which is smooth. Moreover, we require the densities of the forms $\mathcal{B}_1$, $\mathcal{B}_3$, and $\omega_1$ are smooth and do not depend on $s$ and $\epsilon$, while ones of the $\mathcal{B}_2$, $\dot{\mathcal{B}}_2$ and $\omega_2$ are smooth and bounded functions on $\mathcal{C}_{\epsilon}$ for all $\epsilon\in[0,\epsilon_0]$, where the dot denotes the derivative by $s$. If the following integral 
\begin{equation*}
\int_{\mathcal{C}_{\epsilon}}\omega\wedge V(\mathcal{F}_{\epsilon})
\end{equation*}
does not depend on $\epsilon$, then it is equal to
\begin{equation*}
\int_{\Sigma}\omega_1\wedge TV(\mathcal{B}_1+\mathcal{B}_3,\mathcal{B}_1),
\end{equation*}
where we mean the restriction of the forms on $\Sigma$, and $TV$ is the transgression of the polynomial $V$ in the formulation from \cite{Nakahara:2003nw} (see formula (11.21)).
\end{lem}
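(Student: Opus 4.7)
My plan is to compute $\int_{\mathcal{C}_\epsilon}\omega\wedge V(\mathcal{F}_\epsilon)$ directly and pass to the limit $\epsilon\to 0^+$, which is legitimate since by hypothesis the integral is independent of $\epsilon$. First, I would split the curvature into its tangential and transversal pieces. Setting $A_s := \mathcal{A}_\epsilon|_{\Sigma\times\{s\}}$ and $\mathcal{F}_s^{\Sigma}:=d_\Sigma A_s + A_s\wedge A_s$, one has
\begin{equation*}
\mathcal{F}_\epsilon = \mathcal{F}_s^{\Sigma} + ds\wedge\dot A_s,\qquad \dot A_s = \mathcal{B}_2 + s\dot{\mathcal{B}}_2 + \tfrac{1}{\epsilon}f'(s/\epsilon)\,\mathcal{B}_3.
\end{equation*}
Because $V$ is symmetric multilinear of some degree $m$ and $(ds\wedge\dot A_s)\wedge(ds\wedge\dot A_s)=0$, the expansion of $V$ around $\mathcal{F}_s^{\Sigma}$ terminates at first order,
\begin{equation*}
V(\mathcal{F}_\epsilon) = V(\mathcal{F}_s^{\Sigma}) + m\,ds\wedge V(\dot A_s,\mathcal{F}_s^{\Sigma},\ldots,\mathcal{F}_s^{\Sigma}).
\end{equation*}
Only the second term contributes to the top-form integral on $\mathcal{C}_\epsilon$, and since $\omega$ has even degree no sign arises when $ds$ is moved past $\omega$.

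Next, I would rescale $t=s/\epsilon\in[0,1]$ and use multilinearity to absorb the Jacobian $\epsilon$ into the first argument of $V$, which precisely neutralises the $\epsilon^{-1}$ singularity in $\dot A_s$. This yields
\begin{equation*}
\int_{\mathcal{C}_\epsilon}\omega\wedge V(\mathcal{F}_\epsilon) = \int_\Sigma\int_0^1 m\,\omega\wedge V\bigl(f'(t)\mathcal{B}_3+\epsilon\mathcal{B}_2+\epsilon^{2} t\,\dot{\mathcal{B}}_2,\,\mathcal{F}_{\epsilon t}^{\Sigma},\ldots,\mathcal{F}_{\epsilon t}^{\Sigma}\bigr)\,dt,
\end{equation*}
where every $s$-dependent quantity is evaluated at $s=\epsilon t$. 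The uniform boundedness of $\mathcal{B}_2$, $\dot{\mathcal{B}}_2$ and $\omega_2$ on $[0,\epsilon_0]$ dominates the integrand, so by dominated convergence I may let $\epsilon\to 0^+$ pointwise in $t$, whereupon the integrand tends to $m\,\omega_1\wedge V(f'(t)\mathcal{B}_3,\mathcal{F}_t,\ldots,\mathcal{F}_t)$ with $\mathcal{F}_t := d(\mathcal{B}_1+f(t)\mathcal{B}_3) + (\mathcal{B}_1+f(t)\mathcal{B}_3)^{2}$.

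The last step is to convert the $t$-integral into the standard straight-line transgression. Setting $\mathcal{F}'_u := d(\mathcal{B}_1+u\mathcal{B}_3) + (\mathcal{B}_1+u\mathcal{B}_3)^{2}$ and $h(u):=mV(\mathcal{B}_3,\mathcal{F}'_u,\ldots,\mathcal{F}'_u)$, the identification $\mathcal{F}_t=\mathcal{F}'_{f(t)}$ turns the $t$-integrand into $f'(t)h(f(t))$. The fundamental theorem of calculus then gives $\int_0^1 f'(t)h(f(t))\,dt = \int_0^1 h(u)\,du$, valid for any smooth $f$ joining $0$ to $1$ (monotonicity is not needed), and the right-hand side is exactly $TV(\mathcal{B}_1+\mathcal{B}_3,\mathcal{B}_1)$ in the convention of \cite{Nakahara:2003nw}. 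Combined with the $\epsilon$-independence of the original integral, this delivers the claimed identity.

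The main obstacle I foresee is purely analytic bookkeeping: one must verify that all the sub-leading corrections generated by $\mathcal{B}_2$, $\dot{\mathcal{B}}_2$, $\omega_2$, and by the deviation of $\mathcal{F}_{\epsilon t}^{\Sigma}$ from $\mathcal{F}_t$, really do tend to zero after the change of variable, which is where the smoothness-and-boundedness clauses of the hypothesis are essentially used. A pleasant byproduct of the FTC manoeuvre is that no closedness assumption on $\omega_1$ is required; reparametrisation-invariance of the transgression, which would usually force one to discard an exact correction term, is built into the identity $\int_0^1 f'(t)h(f(t))\,dt=\int_0^1 h(u)\,du$ from the outset.
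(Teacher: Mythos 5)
Your proposal is correct and follows essentially the same route as the paper: isolate the singular piece $\epsilon^{-1}f'(s/\epsilon)\,ds\wedge\mathcal{B}_3$ of the curvature, use that $ds$ can occur only once so the expansion of $V$ terminates at first order, rescale $t=s/\epsilon$ to neutralise the $\epsilon^{-1}$, drop the $O(\epsilon)$ remainders via the $\epsilon$-independence hypothesis, and recognise the limit as the straight-line transgression. Your explicit dominated-convergence bound and the fundamental-theorem-of-calculus justification of the substitution $u=f(t)$ merely make precise the paper's change of variables $s\to f(s/\epsilon)$, and your homogeneous-degree-$m$ bookkeeping matches the paper's decomposition $V=\sum_k V_k$ with polarisations $\widetilde V_k$.
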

\begin{proof}
The main idea is to use the asymptotic expansion, when $\epsilon\to+0$. For this purpose we represent the field strength in the form of two terms
\begin{equation*}
\mathcal{F}_{\epsilon}=\epsilon^{-1}f^{\prime}(s/\epsilon)\,ds\wedge\mathcal{B}_3+\theta,
\end{equation*}
where the second form has a smooth density which is bounded for all $\epsilon\in[0,\epsilon_0]$. 

It is known that the decomposition $V=\sum_{k\geqslant0}V_{k}$, where $V_{k}$ is the polynomial of degree $k$, is possible. By the symbol $\widetilde{V}_{k}$ we denote the polarization of $V_{k}$. Using the fact that the form $ds$ can only appear once, we have the following expansion
\begin{equation*}
V(\mathcal{F}_{\epsilon})=
\epsilon^{-1}f^{\prime}(s/\epsilon)
\sum_{k\geqslant1}k\widetilde{V}_{k}
(ds\wedge\mathcal{B}_3,\mathcal{F}_{\epsilon},\ldots,\mathcal{F}_{\epsilon})+O(1).
\end{equation*}

After that, using an auxiliary field strength $\widehat{\mathcal{F}}_{s}$ for the 1-form 
$\mathcal{B}_1+s\mathcal{B}_3$
and doing the change of variables $s\to f(s/\epsilon)$ in the integral, we obtain
\begin{equation*}
\int_{\mathcal{C}_{1}}ds\wedge\omega_1\wedge\bigg[
\sum_{k\geqslant1}k\widetilde{V}_{k}
(\mathcal{B}_3,\widehat{\mathcal{F}}_{s},\ldots,\widehat{\mathcal{F}}_{s})\bigg]
+O(\epsilon).
\end{equation*}

The factor $ds$ is separated, so we can go to the restrictions of the forms
$\omega_1$, $\mathcal{B}_1$, and $\mathcal{B}_3$ on $\Sigma$. Furthermore, the first term does not depend on the parameter $\epsilon$, while the second one can be arbitrarily small and can be omitted. 

Also we know that $\omega_1$ does not depend on $s$. So, using the definition of the transgression, the statement of the lemma follows.
\end{proof}

\section{Discussion}
Despite the fact that formula (\ref{APST}) for the index of the Dirac operator $\slashed{D}$ has acquired a new concise form (\ref{APST1}), one of the issues discussed in the paper \cite{ivvas} remains. Namely, under what conditions do we have equality 
$\tilde\eta(\mathcal{A}^+,\Gamma,\mathcal{A}^-,\Gamma)=\tilde\eta(\DD^+,\DD^-)$? Several possible cases were listed in \cite{ivvas}, but they are not criteria. This work does not contain such kind of study.
In other words, we want to know a spectral function of which operator binds to $\tilde{\eta}(\mathcal{A}^+,\Gamma^+,\mathcal{A}^-,\Gamma^-)$ in the general case.

It is also worth noting that the Lemma \ref{lem1} is essentially an independent derivation of formula (\ref{APST1}) and provides its more elegant and concise proof. Moreover, we generalized the index theorem (see Theorem \ref{th2}) to the case when the Riemannian connection density has a jump. However, this latter formula is a little bit cumbersome.

\paragraph{Acknowledgements.}
The author thanks D.V. Vassilevich for discussion of the paper structures and application of the results. Also we express gratitude deeply indebted M.A. Semenov-Tian-Shansky for a careful reading of the manuscript and suggestion of amendments.
This research is fully supported by the grant in the form of subsidies from the Federal budget for state support of creation and development world-class research centers, including international mathematical centers and world-class research centers that perform research and development on the priorities of scientific and technological development. The agreement is between MES and PDMI RAS from \textquotedblleft8\textquotedblright\,November 2019 \textnumero\ 075-15-2019-1620. Also, A.V. Ivanov is a winner of the Young Russian Mathematician award and would like to thank its sponsors and jury.

\end{document}